\numberwithin{equation}{section}
\newtheorem{lemma}{Lemma}[section]
\newtheorem{remark}{Remark}[section]
\theoremstyle{definition}
\newcommand{\mc}[1]{\mathcal{#1}}
\newcommand{\bs}[1]{\boldsymbol{#1}}
\newcommand{\mcal}[1]{\mc{#1}}
\newcommand{\scp}[2]{\left<#1\,,\,#2\right>}
\newcommand{\ad}{\operatorname{ad}}
\def\p{{\partial}}
\def\rmd{{\color{red}{\rm d}}}
\def\bA{{\mathbf{A}}}
\def\bk{{\mathbf{k}}}
\def\bm{{\mathbf{m}}}
\def\bu{{\mathbf{u}}}
\def\bx{{\mathbf{x}}}
\def\p{\partial}
\pgfplotsset{compat=1.16}
\def\bk{\mathbf{k}}
\def\bm{\mathbf{m}}
\def\bu{\mathbf{u}}
\def\bx{\mathbf{x}}
\newcommand\fder[2]{\frac{\delta {#1}}{\delta {#2}}}
\newcommand\lieder[2]{\mathcal{L}_{#1} #2}
\def\contract{\makebox[1.2em][c]{\mbox{\rule{.6em}
{.01truein}\rule{.01truein}{.6em}}}}
\begin{document}

\title{\textbf{Collisions of Burgers Bores with Nonlinear Waves} }
\author{A. Dombret$^1$, D.D. Holm$^2$, R. Hu$^2$, O.D. Street$^2$, H. Wang$^2$\footnote{Corresponding author, email: hanchun.wang21@imperial.ac.uk} \\ \footnotesize
(1) Physics, \'Ecole Normale Sup\'erieure,
(2) Mathematics, Imperial College London \\ \footnotesize
albert.dombret@ens.psl.eu, d.holm@ic.ac.uk, ruiao.hu15@imperial.ac.uk,\\\footnotesize
o.street18@imperial.ac.uk, hanchun.wang21@imperial.ac.uk
\\ \small
Keywords: Geometric mechanics; stochastic processes; 
Lie group invariant variational principles
}
\date{}

\maketitle

\begin{abstract}
This paper treats nonlinear wave-current interactions in their simplest form -- as an overtaking collision. In one spatial dimension, the paper investigates the collision interaction formulated as an initial value problem of a Burgers bore overtaking solutions of two types of nonlinear wave equations -- Korteweg–de Vries (KdV) and nonlinear Schr\"odinger (NLS). The bore-wave state arising after the overtaking Burgers-KdV collision in numerical simulations is found to depend qualitatively on the balance between nonlinearity and dispersion in the KdV equation. The Burgers-KdV system is also made stochastic by following the stochastic advection by Lie transport approach (SALT). 
\end{abstract}


\tableofcontents


\section{Introduction}\label{sec: Intro}

Our topic is the nonlinear momentum exchange between surface waves and the currents which carry them. For example, the wind stress creates waves and swells on the sea surface. Those sea surface waves may then exchange momentum with the fluid flow at the surface. We model the last step via a composition-of-maps approach in which waves are taken as a degree of freedom which exchanges momentum and energy with the fluid current that carries them. 

Our composition-of-maps approach models nonlinear surface waves as propagating in the reference frame of the fluid velocity at the surface. The total momentum of the system may be written as the sum of wave momentum and fluid momentum in the fixed Eulerian frame. By Newton’s 2nd Law, the time rate of change of this total momentum equals the true force acting on the fluid flow in an inertial frame. However, the acceleration of the fluid velocity is only part of the rate of change of this total force. Thus, the acceleration of the fluid velocity appears to acquire an additional fictitious force (such as the Coriolis force, or Craik-Leibovich force) when experienced in the non-inertial reference frame of the fluid velocity. 

The non-inertial force in the fluid frame arising from the shift of the fluid momentum by the addition of the wave momentum in the Eulerian frame can be regarded as an additional  source of circulation around a Lagrangian loop carried by the fluid velocity. As with the Coriolis force or the Craik-Leibovich vortex force, the difference in fluid velocity circulation dynamics induced by measuring the fluid velocity circulation relative to the wave velocity circulation can be exhibited by calculating the Kelvin theorem for the full system and subtracting out the wave velocity contribution to the circulation integral. To treat this wave-current momentum interaction dynamics, a theory based on the composition of the fluid flow map and the wave dynamics map has been developed recently in \cite{HHS2022a,HHS2023}.

This wave-current momentum interaction dynamics can be illustrated in 1D by setting up an initial condition for a wave-current `collision’ in which, for example, a Burgers ramp/cliff solution (an advancing B-bore velocity profile of the fluid current) overtakes a set of KdV nonlinear dispersive wave packets in its path. The numerical simulations in the present work show for these initial conditions that when the B-bore overtakes the KdV wave packets, the large fluid velocity gradient at the leading edge of the B-bore can rapidly feed the amplitude of the KdV waves so much that -- in the frame of motion of the Burgers leading edge -- the KdV solution can incorporate part of the Burgers velocity and carry it forward ahead of the new Burgers leading edge as a compound wave. 


Physically, though, real bores driven by the tide and advancing up the Severn river for example tend not to show this numerically simulated formation of compound waves. Instead, they tend to show a train of small amplitude surface waves which have been swept up and embedded in the shallow ramp profile behind the advancing front of the bore \cite{CotterBokhove2009}. This qualitative difference in behaviour is found in numerical simulations in the present work to depend on the balance between nonlinearity and dispersion in the KdV equation. Namely, as the dispersion coefficient is raised at fixed nonlinearity coefficient the behaviour of the KdV waves in the B-KdV system can switch their behaviour from being passively incorporated into the Burgers velocity profile to actively incorporating part of the Burgers momentum and breaking away run ahead of the Burgers front as a compound wave. See Figure \ref{fig:gamma1_2} for a comparison of simulation results.   

Thus, from the viewpoint of the present work, this different behaviour in our simulations of the B-KdV system arises because of a bifurcation depending on the balance between the KdV nonlinearity and the KdV dispersion and perhaps also with the Burgers nonlinearity. This type of bifurcation study is in progress also for the wave-current interaction of Burgers currents and waves governed by the nonlinear Schrodinger (NLS) equation. However, a discussion of the study for B-NLS collisions will be deferred to future work. For the treatment of wave-current interaction between NLS waves carried by Euler fluid motion in two-dimensions, see \cite{HHS2022a,HHS2023}.

\section{Modelling considerations}
Previous work has shown that in models of wave mean flow interaction (WMFI), although the mean flow may not itself create waves, the interaction of the mean flow with existing waves can have strong effects both on the mean flow and on the waves, \cite{HHS2023}. In this paper, we will exhibit a geometric approach to wave-current interaction based upon composition of maps introduced in \cite{HHS2023}, which we illustrate by considering two examples of a bore -- whose dynamics is governed by the Burgers equation -- overtaking a set of water waves governed either by the Korteweg-de Vries (KdV) equation in one example, or by the nonlinear Schr\"odinger (NLS) equation in the other.  The dynamics of all three of these types of water waves are well known. However, the application of the method of composition of flow maps to the collision interaction of a Burgers bore overtaking a set of nonlinear shallow water waves seems to be new. 

The present work aims to investigate nonlinear wave-current interactions in their simplest forms, in one dimension (1D) on the real line $\mathbb{R}$. Even in 1D these interactions of different types of waves can be profound. In particular, we investigate overtaking interactions of ramps-and-cliffs shaped bore solutions of the inviscid Burgers equation,%
\footnote{The factor of 3 in the PDE form of the inviscid Burgers equation here signals the geometric notation to be used later for Lie transport by vector field $u^\sharp$ acting on a 1-form-density, e.g, $\mcal{L}_{u^\sharp} (m dx \otimes dx)\simeq (m u_x + (mu)_x)dx^2$ with $m=u$ and $u^\sharp$. \\In the KdV equation, though, the factor of 6 in the nonlinearity is traditional, following \cite{GGKM1967,Miura1976,ChevZHao2024}. The relevance of the ratio of coefficients of the nonlinearity and dispersion in the KdV equation for the qualitative result of collisions of the Burgers bore with KdV waves is demonstrated in Figure \ref{fig:gamma1_2} of Section \ref{sec: HB-KdV results}.}
\[ 
u_t + 3uu_x = 0
\,,\]
interacting with:
\begin{enumerate}
    \item 
Korteweg-de Vries (KdV) soliton solutions governed by 
\[
v_t + 6vv_x + \gamma v_{xxx} = 0  
\,.\]
As for the Burgers equation, the KdV equation is Galilean invariant in the sense that a given solution $v(x, t)$ remains a solution when `boosted' into a moving frame by replacing $x$ with $x + ct$ everywhere in $v(x, t)$ so that
\[
v (x,t)\mapsto v _{[c]}(x,t)=v (x+vt,t).
\]

\item 
The Nonlinear Schr\"odinger (NLS)
\[
i\hbar \psi_t = -\frac12 \psi_{xx} + \kappa|\psi|^2\psi\,,
\]
describes wave packets for a complex variable (wave function) $\psi(x,t)$. 
It has two types of solution known as focusing $(\kappa<0)$ and de-focusing $(\kappa>0)$.
The amplitude of the solution is given by $|\psi|^2=\psi^*\psi$.

The nonlinear Schr\"odinger equation is Galilean invariant in the sense that
a given solution $\psi(x, t)$ remains a solution when `boosted' into a moving frame by replacing $x$ with $x + ct$ everywhere in $\psi(x, t)$ while also multiplying by a phase factor of ${\displaystyle e^{-iv(x+ct/2)}\,}$ so that
\[
\psi (x,t)\mapsto \psi _{[c]}(x,t)=\psi (x+ct,t)\;e^{-ic(x+ct/2)}.
\]
\end{enumerate}
\begin{remark}
Although one might expect that the Burgers bores would either `snowplow' the waves ahead or run them over whenever it encounters them, the coupled nonlinear wave equations will tell a different story. In fact, as we shall discuss, the interactions of Burgers bores with KdV and NLS solutions can be much more profound than a simple `snowplow' effect. 
\end{remark}

\textbf{Approach.}
The equations governing wave-current dynamics in two dimensions have been derived and exemplified in \cite{HHS2023}. The equations governing the coupling dynamics exemplified here between the Burgers ramps-and-cliffs\footnote{Although the Euler-Poincar\'e variational derivation produces the inviscid Burgers equation, both global well-posedness of solutions and stability of the numerical simulations of the Burgers ramp-and-cliff dynamics require viscous regularisation.} and the KdV and NLS solitons will be derived following the work of \cite{HHS2023}. Namely, the \emph{Bore-Soliton} equations treated here will be derived by Hamilton's principle in a variational framework which couples the sum of two Lagrangians for the separate bore and soliton degrees of freedom via insertion of a vector field representing the Burgers current velocity into a 1-form density representing the momentum map for each type of soliton. A variant of this general approach was introduced by Dirac and Frenkel \cite{FrenkelDirac1934} in coupling the Schr\"odinger equation probability current density $J$ with the electromagnetic field vector potential $A$ to study linear nonrelativistic quantum electrodynamics (QED).  The quantum-classical $(J\cdot A)$ coupling of Schr\"odinger wave functions and Maxwell fields is known to produce profound cooperative effects, such as stimulated emission of radiation. The present work investigates what effects may occur when the Burgers current velocity is coupled to the dynamics of two well-known nonlinear wave soliton equations KdV and NLS  via their momentum maps arising in their corresponding phase-space variational principles.

\subsection{Examples}
\subsubsection{Burgers-KdV (B-KdV) dynamics}
For 1D wave-current interaction in the B-KdV case, the approach discussed here amounts to the composition of the two smooth invertible maps that govern the dynamics of the two continuum variables comprising the respective solutions for the bore momentum 1-form density ($u\,dx^2\in \Lambda^1(\mathbb{R})\otimes \mathrm{Den}(\mathbb{R})$) and the KdV soliton density ($v\,dx\in\mathrm{Den}(\mathbb{R})$). The Burgers velocity vector field $u^\sharp := u\partial_x\in\mathfrak{\mathbb{R}}$ is tangent to the right action of smooth invertible maps of the real line $\mathbb{R}$ onto itself by the diffeomorphism Lie group. The same map governs KdV dynamics, except it is augmented by the Gel'fand-Fuchs 2-cocycle, which introduces the third-order dispersion term in KdV and which together with the diffeomorphisms comprises the Bott-Virasoro Lie group. Thus, the interaction of these two types of \emph{coherent structures} for Burgers ramps-and-cliffs and KdV solitons will be governed by the composition of two time-dependent, smooth, Lie-group transformations of the real line $\mathbb{R}$ onto itself, in which one of these maps is extended by the Gel'fand-Fuchs 2-cocycle. 

As discussed below in Section \ref{sec: HB-KdV results}, Hamilton's principle for the sum of Lagrangians for Burgers and KdV equations coupled via the product of the fluid velocity and the wave momentum map yields the Burgers-KdV system of partial differential equations,%
\footnote{The Burgers-KdV \emph{system} in \eqref{eqns: HB+KdV_intro} is not in the same category as the Burgers-KdV \emph{equation} in \cite{Wazwaz2010}.}
\begin{align}
\begin{split}
\partial_t u + 3uu_x &= - v\, \p_x \big(\gamma v_{xx} + 3v^2 \big)
\,,
\\
\partial_t v  +\partial_x (uv) &= -\,\partial_x \big(\gamma v_{xx} + 3v^2 \big)
\,.
\end{split}
\label{eqns: HB+KdV_intro}
\end{align}
The right-hand sides of these equations arise represent the coupling between the two individual equations. The KdV dynamics for its potential velocity $v\,dx=\phi_x\,dx = d\phi$ can be regarded as being `swept' or `advected' as a density by the velocity vector field, $u^\sharp$, of the Burgers fluid current as,
\begin{align}
\begin{split}
(\partial_t  + \mathcal{L}_{u^\sharp})u\,dx^2 
&= -\, d \big(\gamma v_{xx} + 3v^2 \big)\otimes v\,dx
\,,
\\
(\partial_t  + \mathcal{L}_{u^\sharp})v\,dx 
&= -\,d \big(\gamma v_{xx} + 3v^2 \big)
\,.
\end{split}
\label{eqns: HB+KdV_geom}
\end{align}
However, this `advection' is not passive. Solutions of the B-KdV dynamics in \eqref{eqns: HB+KdV_intro} and their equivalent geometric form in \eqref{eqns: HB+KdV_geom} are observed in numerical simulations shown in the figure below to exert profound effects on the Burgers velocity vector field $u^\sharp$ that `sweeps' it. 

The figure below shows the typical Burgers-KdV overtaking collision.
\begin{figure}[H]
\begin{center}
\includegraphics[width=\textwidth]{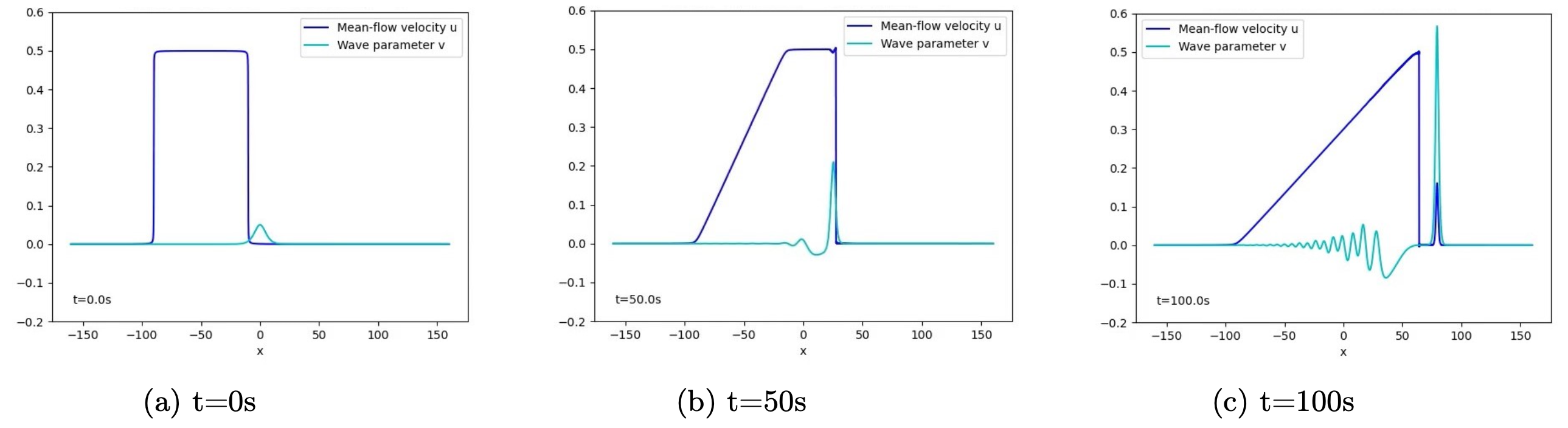}
\caption{The plots show the evolution of the mean-flow Burgers rightward velocity $u$ and the wave parameter $v$ in the coupled Burgers-KdV system of equations in 
\eqref{eqns: HB+KdV_intro} 
fromfor a small viscosity of $\nu = 0.01$ in the Burgers equation to stabilise the numerical simulation.  At time $t=0$, the Burgers bore overtakes the KdV soliton. At time $t = 50s$ the Burgers bore has started transferring momentum to the KdV wave and a wave moving rightward in the Burgers frame is developing. At time $t = 100s$ one sees the compound Burgers-KdV wave advancing ahead of the Burgers bore and leaving behind a leftward moving KdV wave train as viewed from the leading edge of the instantaneous Burgers motion.}
\end{center}
\end{figure}\vspace{-5mm}

\begin{figure}[H]
    \centering
    \includegraphics[width=\textwidth]{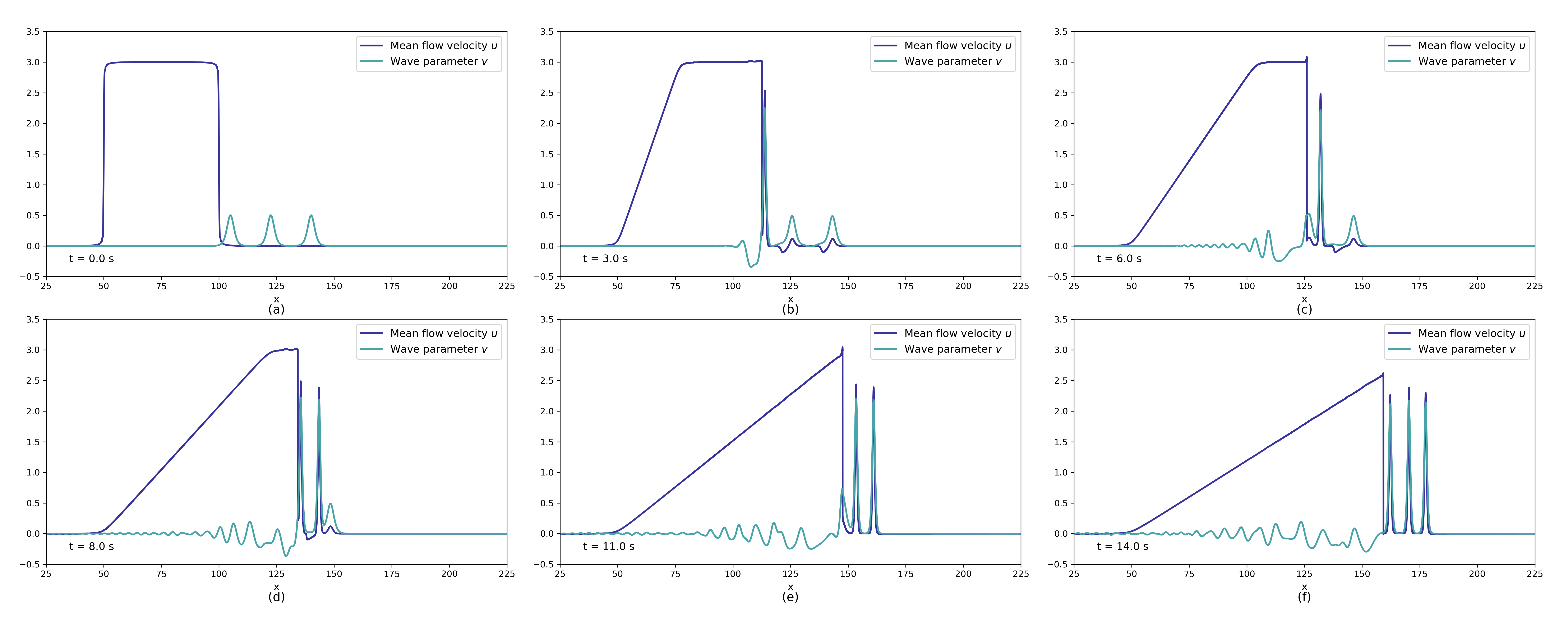}
    \caption{At time $t=0$, the Burgers bore overtakes the 1st of three identical rightward moving KdV waves in the Burgers frame. At time $t = 3s$, the Burgers bore has started transferring momentum to the first KdV wave, a KdV wave moving leftward in the frame of the bore is developing and the leading 2nd and 3rd KdV waves are beginning to create small Burgers waves. At time 6s, the 1st KdV wave has transferred most of its momentum to the 2nd (middle) KdV wave and a KdV wave train is moving leftward. At time 8s, momentum transfer from the bore has restored the amplitude of the 1st KdV wave and the 2nd KdV wave is overtaking the 3rd (rightmost) one. At time t=11s, the 2nd KdV wave has transferred its momentum to the rightmost 3rd wave and both of them have entrained part of the bore in becoming compound travelling waves. At t=14s, all three KdV waves have become rightward moving compound Burgers-KdV travelling waves. The middle wave will eventually overtake and transfer momentum to the leading wave, so that the heights of the compound waves will be ordered in velocity. }
    \label{fig:bore_three_soliton}
\end{figure}\vspace{-5mm}

\subsubsection{Burgers-NLS} 
The Burgers Ramp/Ciiff solution and the NLS solitons interact quite differently from the interactions of Burgers Ramp/Ciiff solutions and the KdV solitons. The Lie-Poisson form and the canonical Hamilton's canonical equations represented by the polar decomposition $\psi=\sqrt{N}\exp{i\phi}$ for the Burgers-NLS interaction are given by
\begin{equation}
    \begin{split}
&({\partial _t} + \mathcal{L}_{u^\sharp})( u - N\p_x \phi ) ( dx \otimes dx ) = 0
\,,\\
&({\partial _t} + \mathcal{L}_{u^\sharp+\phi_x^\sharp})(N\,dx)
\,,\\
&{\partial _t}\phi + u  \phi_x  =   - \frac{1}{2}  \phi_x^2 -\, \frac{ (\sqrt{N})_{xx} }{2 \sqrt{N}} + F^{\prime}(N)
\,.    \end{split}
\end{equation}
The figure below shows an example of the Burgers-NLS interaction equations derived in Section \ref{sec:NLS}.

\begin{figure}[H]
    \centering
    \includegraphics[width=\textwidth]{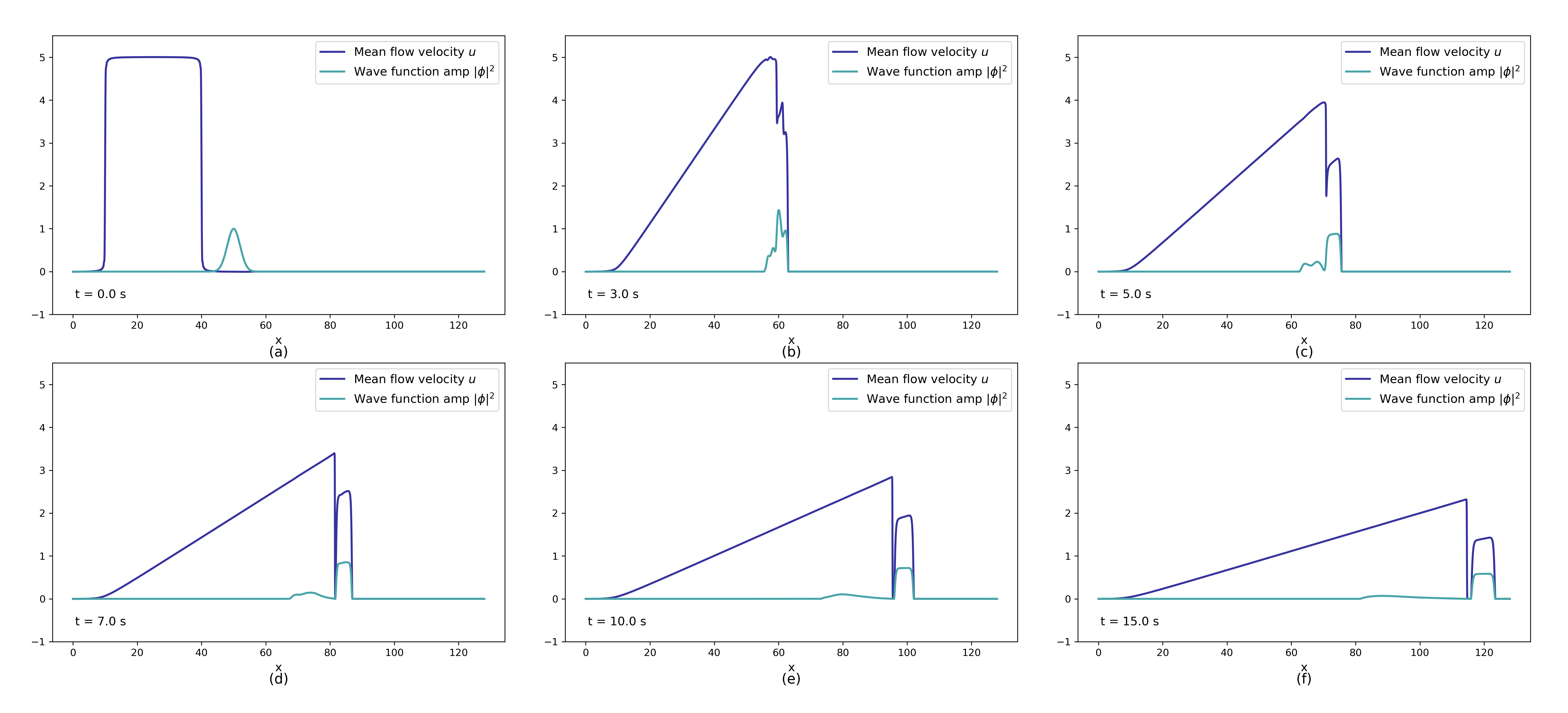}
    \caption{This plot illustrates the evolution of the mean-flow velocity $u$ and the wave function amplitude $|\phi|^2$ in the coupled Burgers-Nonlinear Sch\"odinger (Burgers-NLS) system of equations treated in section \ref{sec:NLS}.  The Burgers equation ramp/cliff solution is regularised by viscosity of $\nu_1=0.1$ in this case and dissipation $\nu_2=0.1$ has been added to the NLS equation to reduce the frequency of its phase oscillations. At time $t=0$, the Burgers bore starts to overtake the NLS Gaussian wave packet. At $t=4$, the bore is clearly transferring momentum to the NLS wave package. At $t=15$, a compound Burgers-NLS wave advances ahead of the ramp/cliff formation of the bore.}
    \label{fig:nls}
\end{figure}

\paragraph{Plan of the paper.}
\begin{itemize}
\item
Section \ref{EPDiff-SWwaves-sec} provides the background materials for shallow water waves and in one spatial dimensions. The examples of inviscid Burgers', Korteweg–De Vries (KdV) and Camassa Holm (CH) equations are discussed.  
\item
Section \ref{sec: HB-KdV results} discusses the Burgers-Korteweg-de Vries (B-KdV) results. In particular, Figure \ref{fig:gamma1_2} demonstrates the sensitivity of these results to the balance between nonlinearity and dispersion in the KdV nonlinear wave subsystem of the B-KdV collision. Additionally, we briefly consider the introduction of stochasticity into the B-KdV equations via the approach of Stochastic Advection by Lie Transport (SALT).

\item
Section \ref{sec:NLS} derives the one dimensional coupling of Burgers dynamics to the NLS equations. Note that this is a 1D analogue of the 2D coupling of Euler to NLS considered in \cite{HHS2023}.
\item
Section \ref{sec: Conclude} provides a brief summary of the results in this paper and an outlook for further developments.

%

\end{itemize}

\section{EPDiff and Shallow Water Waves}\label{EPDiff-SWwaves-sec}

\subsection{Introduction to wave equations}

Wave equations are evolutionary equations for time dependent curves in a space of smooth maps $C^\infty(\mathbb{R}^n,V)$ for solutions, $u\in V$, taking values in a vector space $V$. 
\begin{equation}
\partial_t u = f(u)
\,,\quad\hbox{or}\quad
\partial_t u_i(\mathbf{x},t) = f_i(u_i,\, u_{i,j},\, u_{i,jk},\, u_{i,jkl},\, \dots\, )
\,.
\label{eq: wave}
\end{equation}
Typically, $V$ is $\mathbb{R}$ or $\mathbb{C}$, $n=1$.
We are interested in the Cauchy problem. Namely, solve \eqref{eq: wave}  for $u(x,t)$, given the initial condition $u(x,0)$ and boundary conditions $u(x|_{\partial D},t)$.

\paragraph{Travelling waves.}
The simplest wave solution is called a travelling wave. This solution is a
function $u$ of the form 
\[
u(x, t) = F(x - ct)
\,,\] 
where $F : \mathbb{R} \to V$ is a function defining the wave
shape, and $c$ is a real number defining the propagation speed of the wave. 
Thus, travelling waves preserve their shape and simply translate to the right at a constant speed, $c$. 

\paragraph{Plane waves.}
A complex-valued travelling wave, called a plane wave,  plays a fundamental role in the theory of linear wave equations. 
The general form of a plane wave velocity is 
\[
u(x, t) = \Re e (A e^{i(kx-\omega t)}), 
\]
where $|A|$ is the wave amplitude, $k$ is the wave number, $\omega$ is the wave frequency, and $c_p=\omega/k$ is the speed along the oscillating waveform. 


\subsection{Conservation laws}

Conservation laws for evolutionary equations of the form $u_t = f(u) $ satisfy
\[
\frac{d}{dt}\int F(u)\,dx = \int \frac{\delta F}{\delta u} u_t\,dx  = \int \frac{\delta F}{\delta u} f(u) \,dx = \int dG(u) = 0
\,,
\]
for some functions $F$ and $G$ of $u$ and its derivatives, and for suitable boundary conditions.  

For example, the inviscid Burgers equation 
\begin{equation}
u_t + uu_x = 0 \,,
\label{eq: Burgers}
\end{equation}
has an infinite number of conservation laws, given by $C_n=\int \frac{u^n}{n}\,dx$
\begin{equation}
\frac{dC_n}{dt} =
\frac{d}{dt}\int \frac{u^n}{n}\,dx = \int u^{n-1}u_t\,dx = - \int u^nu_x\,dx 
= -\int \frac{1}{n+1}\partial_xu^{n+1}\,dx  = - \frac{1}{n+1}\int d(u^{n+1}) = 0
\,,
\label{eq: BurgersCLs}
\end{equation}
for homogeneous boundary conditions and any integer $n$. 


Even so, the solutions of the inviscid Burgers equation carry the seeds of their own destruction, since they exhibit wave breaking in finite time. That is, without dissipation or dispersion their velocity profile would develop a negative vertical slope in finite time. This is shown in the proof of the following Lemma.

\begin{lemma}[Steepening Lemma for the inviscid Burgers equation]\label{BurgersWaveBreakingLemma}\rm$\quad$\\
Suppose the initial profile of velocity $u(0,x)$ for the inviscid Burgers equation \eqref{eq: Burgers} has an inflection point of negative slope $u_x(0,\overline{x}(0))<0$ located at $x=\overline{x}(0)$ to the right of its maximum, and otherwise it decays to zero in each
direction sufficiently rapidly for all of its conservation laws in equation \eqref{eq: BurgersCLs}
to be finite. Then the negative slope at the inflection point will become vertical in
finite time.
\end{lemma}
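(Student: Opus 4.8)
The plan is to pass to a Lagrangian (characteristic) description, reduce the evolution of the slope at the inflection point to a scalar Riccati equation, and read off finite-time blow-up. Along a characteristic $x=\overline{x}(t)$ defined by $\dot{\overline{x}}=u(\overline{x}(t),t)$, equation \eqref{eq: Burgers} shows that $u$ is transported unchanged, since $\frac{d}{dt}u = u_t + \dot{\overline{x}}\,u_x = u_t + uu_x = 0$; hence the characteristics are straight lines. Differentiating \eqref{eq: Burgers} once in $x$ gives $u_{xt} + (u_x)^2 + uu_{xx}=0$, so the material derivative of the slope satisfies the closed Riccati equation
\begin{equation*}
\frac{d}{dt}u_x = u_{xt} + uu_{xx} = -(u_x)^2
\end{equation*}
along every characteristic, the right-hand side depending on $u_x$ alone.

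The step I regard as the crux is to show that the inflection point is itself carried along a characteristic, so that the Riccati equation may be evaluated at a single moving point. Differentiating \eqref{eq: Burgers} twice in $x$ yields $u_{xxt} + 3u_xu_{xx} + uu_{xxx}=0$, whence
\begin{equation*}
\frac{d}{dt}u_{xx} = u_{xxt} + uu_{xxx} = -3u_xu_{xx}.
\end{equation*}
At a point where $u_{xx}=0$ the right-hand side vanishes, so the condition $u_{xx}=0$ is preserved along the characteristic issuing from $\overline{x}(0)$. The inflection point therefore persists and moves with the flow, and the slope there, $s(t):=u_x(\overline{x}(t),t)$, obeys $\dot{s}=-s^2$ with initial value $s(0)=u_x(0,\overline{x}(0))<0$ by hypothesis.

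Integrating the Riccati equation gives $s(t)=s(0)/(1+s(0)t)$, which is monotonically decreasing and diverges to $-\infty$ as $t\uparrow t^\ast := -1/s(0)>0$; this is exactly the assertion that the slope at the inflection point becomes vertical in finite time. The remaining obstacle is to ensure that the characteristic analysis stays valid up to $t^\ast$, i.e. that no singularity forms earlier elsewhere. This is where the geometric hypotheses enter: since $\overline{x}(0)$ is an inflection point of negative slope lying to the right of the maximum and the profile decays rapidly in both directions, $u_x(0,\cdot)$ attains its global minimum at $\overline{x}(0)$, so $t^\ast = -1/\min_x u_x(0,x)$ is the earliest caustic time. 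Consequently the solution remains smooth on a neighbourhood of the tracked characteristic throughout $[0,t^\ast)$, and the blow-up of $s$ there is genuine.
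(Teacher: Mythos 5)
Your core argument is correct and, at bottom, it is the paper's argument: both reduce the claim to the Riccati equation $\dot{s}=-s^2$ for the slope at the moving inflection point, whose solution $s(t)=s(0)/(1+s(0)t)$ diverges at $t^\ast=-1/s(0)$, exactly as in \eqref{slope-eqnBurgers}. The mechanisms differ in one step. The paper evaluates the once-differentiated equation $u_{xt}+u_x^2+uu_{xx}=0$ along the inflection-point trajectory $\overline{x}(t)$, where $u_{xx}=0$ simultaneously kills the advection term $\dot{\overline{x}}\,u_{xx}$ in the chain rule and the term $uu_{xx}$; it therefore never needs to know how the inflection point moves. You instead prove that the inflection point is material: since $u_{xx}$ satisfies the linear transport law $\tfrac{d}{dt}u_{xx}=-3u_xu_{xx}$ along characteristics, the initial condition $u_{xx}=0$ propagates along the characteristic issuing from $\overline{x}(0)$. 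That step is sound and makes explicit something the paper leaves implicit (that a persistent, smoothly moving inflection point exists), though your justification should be uniqueness for this linear ODE, not merely that ``the right-hand side vanishes'' at a point --- vanishing of the derivative at one instant does not by itself preserve the condition.

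The genuine flaw is in your closing paragraph. The hypotheses do not imply that $u_x(0,\cdot)$ attains its global minimum at $\overline{x}(0)$: they posit only \emph{an} inflection point of negative slope to the right of the maximum, plus decay. For instance, superpose on the given bump a second, much narrower and steeper bump far to its right; every hypothesis still holds with $\overline{x}(0)$ the first bump's right inflection point, yet the steepest negative slope --- hence the first caustic --- occurs on the far bump, strictly before $t^\ast$, and the classical solution (and with it your characteristic analysis) ceases to exist before the tracked slope diverges. So ``$t^\ast$ is the earliest caustic time'' is not established and can fail. This extra claim is not needed to match the paper's own level of rigor: the paper tacitly assumes only that the solution stays smooth near the tracked inflection point up to breaking, and the honest conclusion in both proofs is that $s(t)=s(0)/(1+s(0)t)$ holds for as long as the solution remains classical, so that a vertical slope must form somewhere no later than $t^\ast=-1/s(0)$. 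If you want your stronger statement, you must either add the hypothesis that $\overline{x}(0)$ minimises the initial slope or restate the conclusion as an upper bound on the breaking time.
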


\begin{proof}
Consider the evolution of the slope at the inflection point, defined by 
$s(t)=u_x(\overline{x}(t),t)$. Then the inviscid Burgers equation \eqref{eq: Burgers} yields an evolution equation for the slope, $s(t)$. Namely, using $u_{xx}(\overline{x}(t),t)=0$ the spatial derivative of equation \eqref{eq: Burgers} leads to 
\begin{equation} \label{slope-eqnBurgers}
\frac{ds}{dt} =-\, s^2
\quad\Longrightarrow\quad
s(t) = \frac{s(0)}{1+ s(0)t}
\,.
\end{equation}
Thus, if $s(0)<0$, the slope at the inflection point  $s(t)$ will become increasingly more negative, until it becomes vertical at time $t= -1/s(0)$. 
\end{proof}


\subsection{Survey of weakly nonlinear water wave equations: KdV and CH}

The derivation of weakly nonlinear water wave equations starts  
with Laplace's equation for the velocity potential of an inviscid,
incompressible, and irrotational fluid moving in a vertical plane under
gravity with an upper free surface, as, e.g., in \cite{Whitham[1974]}.

The equations are then expanded in the small parameters
$\epsilon_1 = a/h$ and $\epsilon_2 = h^2/l^2$. Here
$\epsilon_1 \ge \epsilon_2 > \epsilon_1^2$ and $a$, $h$, and $l$ denote the
wave amplitude, the mean water depth, and a typical horizontal
length scale (e.g., a wavelength), respectively. 
Length is measured in terms of $l$, height in $h$ and time in
$l/c_0$. The elevation $\eta$ is scaled with $a$ and fluid velocity $u$
is scaled with $c_0 a / h$. Here, $c_0=\sqrt{gh}$ is the linear wave speed 
for undisturbed water at rest at spatial infinity, where $u$ and its derivatives
$u_x$ and $u_{xx}$ are taken to vanish.

The result of the expansion to quadratic order in $\epsilon_1$ and $\epsilon_2$ 
is the equation for the surface elevation
$\eta$ (see e.g. \cite{Whitham[1974]}, p. 466), while higher order terms ($HOT$)
can be found in e.g. \cite{LiZhiSibgatullin[1997]},
\begin{eqnarray}
\label{asymtotic1}
0 = \eta_t+\eta_x+\frac{3}{2}\epsilon_1\,\eta\, \eta_x
+\frac{1}{6}\epsilon_2\,\eta_{xxx}
-\frac{3}{8}\epsilon_1^2\,\eta^2\,\eta_x
+\,\epsilon_1\epsilon_2\left(\frac{23}{24}\eta_x\,\eta_{xx}
+ \frac{5}{12}\eta\,\eta_{xxx}\right) + 
\epsilon_2^2 \frac{19}{360}\eta_{xxxxx} + \hbox{\small HOT}
\label{NLW-AsympExspan}
\end{eqnarray}
where partial derivatives are denoted by subscripts. 


\medskip

Next, following Kodama
\cite{Kodama[1985&1987],Kodama-Mikhailov[1996]}
one applies the near-identity transformation,
\[
\eta = u +
\epsilon_1\,f(u) +
\epsilon_2\,g(u)
\,,\]
to the $\eta-$equation (\ref{asymtotic1}) and seeks functionals $f(u)$ and $g(u)$ that consolidate the terms of order $O(\epsilon_1^2)$ and $O(\epsilon_1\epsilon_2)$ in \eqref{asymtotic1} into one order $O(\epsilon_2^2)$ term under normal form transformations. This procedure produces the following 1+1 quadratically nonlinear Camassa-Holm (CH) equation
for unidirectional water waves with fluid velocity, $u\,(x,t)$ and momentum $m=u-\alpha^2u_{xx}$, with constant $\alpha^{\,2} = (19/60)\epsilon_2$, see \cite{DGH[2001]},
\begin{equation}\label{CH-BV1}
m_t + c_0 u_x + \frac{\epsilon_1}{2}(um_x + 2mu_x)
+ \epsilon_2\frac{3}{20}u_{xxx} = 0
\,.
\end{equation}
After these normal form transformations, equation \eqref{CH-BV1} is equivalent to the shallow water wave
equation \eqref{asymtotic1} up to, and including, terms of order ${\cal O}(\epsilon_2^2)$.
For $\alpha^2$ positive, equation (\ref{CH-BV1}) becomes the Camassa-Holm equation derived and shown to be completely integrable in \cite{CH[1993]}. Hereafter, we will take $\alpha^2=0$ and leave the Burgers-CH interaction for later work.

For $\alpha^2=0$, equation (\ref{CH-BV1}) remains completely integrable, as it restricts to the Korteweg-de Vries (KdV) equation,
\begin{equation}\label{KdV-eqn1}
u_t + c_0 u_x + 3u\,u_x + \gamma\, u_{xxx} = 0
\,,
\end{equation}
which admits the soliton solution
$u(x,t)=u_0\,{\rm sech}^2((x-ct)\sqrt{u_0/\gamma}/2)$, $c = c_0+u_0$
see, e.g., \cite{Ablowitz&Segur[1981]}.

\begin{remark}[Interacting solutions at different orders]$\,$

The higher order terms in the asymptotic expansion of the nonlinear shallow water wave equations in equation \eqref{NLW-AsympExspan} represent degrees of freedom that are not accessed by the lower order terms.\footnote{This remark also applies to the asymptotic expansion of the corresponding Lagrangian in Hamilton's principle for the underlying fluid theory. See Gjaja and Holm \cite{GjajaHolm1996} for discussion of the further benefits of applying asymptotic expansions of Hamilton's principle in hierarchies of fluid dynamical approximations.} 

This observation in combination with the Galilean invariance of the Burgers equation and the KdV equation at the next order of the expansion then raises the following question:
What happens when a KdV solution is boosted into the time-dependent frame of motion of a Burgers solution? 
This is the question we address in the present work.
\end{remark}

\section{Burgers -- Korteweg-de Vries (KdV) collisions}\label{sec: HB-KdV results}

In 1D, we consider Hamilton's principle for the formulation of the Burgers-KdV equations where the Lagrangian is given by the sum of the kinetic energy of the Burgers' solution and the Whitham Lagrangian for the KdV solution as follows,
\begin{align}
\begin{split}
0 = \delta S= \delta \int_0^T \ell(u,\phi)\, dt\,,\qquad \ell(u,\phi) := \int_{\mathbb{R}}  \frac12 |u|^2 + \frac12\phi_x\left(\phi_t + u\phi_x\right) + \left( \phi_x^3 - \frac{\gamma}{2} \phi_{xx}^2 \right)\,dx\,.
\end{split}
\label{def-BKdV-action}
\end{align}
In Hamilton's principle \eqref{def-BKdV-action}, the variation in $u$ is constrained to have the form $\delta u = \p_t \xi - \ad_u \xi$ obtained from the Euler-Poincar\'e theory \cite{HMR1998}. The arbitrary variation $\xi$ and the variation $\delta \phi$ are assumed to be arbitrary and vanishing at endpoints $t=0$ and $t = T$.  
As we will see, invoking Hamilton's principle with this Lagrangian yields KdV dynamics in the frame of motion of the Burgers equation, and the KdV dynamics acts back directly on the Burgers dynamics. 

Computing the variations in $u$ and $\phi$ in \eqref{def-BKdV-action} yields
\begin{align}
\begin{split}
    0=\delta S &= \int_0^T \scp{ u + \frac{\phi_x^2}{2}}{\delta u} - \scp{\phi_{tx} + (u \phi_x)_x + \gamma \phi_{xxxx} + 3 \phi_x \phi_{xx}}{\delta \phi}\,dt \\
    & = \int_0^T \scp{ u + \frac{v^2}{2}}{\p_t \xi - \ad_u \xi} - \scp{\partial_t v + (u v)_x + \gamma v_{xxx} + 3 vv_x}{\delta \phi}\,dt \\
    & = \int_0^T \scp{ \left(\p_t + \ad^*_u \right)\left(u + \frac{v^2}{2}\right)}{\xi} - \scp{\partial_t v + (u v)_x + \gamma v_{xxx} + 3 vv_x}{\delta \phi}\,dt\,,
\end{split}
\label{var-u-phi}
\end{align}
where in the second line we have inserted the constrained variations of $u$ and introduced the one-form $v \,dx= \phi_x \,dx$. In equation \eqref{var-u-phi}, the angle bracket operation $\scp{\,m}{\xi\,}: (\Lambda^1(\mathbb{R})\otimes {\rm Den}(\mathbb{R}) )\times \mathfrak{X}(\mathbb{R})\to \mathbb{R}$
denotes the $L^2$ dual pairing
\begin{align}
\scp{\,m}{\xi\,}:= \int_{\mathbb{R}}  \xi m \,dx
:= \int_{\mathbb{R}}  (\xi\p_x)\contract  (mdx)dx
\,,
\label{def-pair-brkt}
\end{align}
where $(\xi\p_x)\contract  (mdx)dx$ denotes insertion of a vector field $\xi\p_x$ into a differential 1-form density $m\,dx\otimes dx$, which one may be abbreviated as $m\,dx^2$ without confusion.\footnote{For a review of differential form notation and usage in fluid dynamics see \cite{Holm2011}.}
The arbitrary variation in $\phi$ yields the equation
\begin{align}
\partial_t v + \partial_x(uv  + \gamma v_{xx} + 3v^2) = 0 \,,\label{eqn-BKdV-vEgn}
\end{align}
where we see that $v$, the solution that corresponds to the KdV part of the flow, is swept along by the HB $u$-solution. That is, the one-form $v\,dx$ is Lie transported by the vector field $u\p_x$. In terms of the Lie derivative ${\cal L}_{u^\sharp}$, one can write the $v$-equation \eqref{eqn-BKdV-vEgn} in coordinates on the real line as, 
\begin{align}
\big(\partial_t  + {\cal L}_{u^\sharp} \big)(v\,dx) = \big(\partial_t v + \partial_x(uv)\big)dx 
= -\, d( \gamma v_{xx} + 3v^2)
\,,\label{LieXport-v}
\end{align}
where the exterior derivative $d$ represents the spatial differential.

The arbitrary variations in $\xi$ yields dynamics for the total momentum 1-form density $m$, defined by 
\begin{align}
m:= {\delta \ell}/{\delta u} = u + \frac12 v^2\,.
\label{Total-BKdV-mom}
\end{align}
Noting that $\ad^*_{u^\sharp} m = \mcal{L}_{u^\sharp} m$ when $m$ is a one-form density, the dynamics of $m$ can be written as 
\begin{align}
\big(\partial_t  + {\cal L}_{u^\sharp} \big)(m\,dx^2) = \big(m_t + (\partial_x m + m \partial_x)u \big)dx^2  = 0 \,,\quad\hbox{with}\quad m := u+\tfrac12v^2 \,,
\label{eqn-BKdV-mEgn}
\end{align} 
where we have used the coordinate expression of $\mcal{L}_{u^\sharp}$ on one-form densities. Thus, we may collect the Lie derivative forms of the Burgers-KdV equations as
\begin{align}
\big(\partial_t  + {\cal L}_{u^\sharp} \big)(m\,dx^2) = 0
\quad\hbox{and}\quad
\big(\partial_t  + {\cal L}_{u^\sharp} \big)(v\,dx) = -\, d\big(\gamma  v_{xx} + 3 v^2\big)
\,.\label{eqn-myEqns-Lie}
\end{align} 

A short calculation to eliminate $m=u+\tfrac12v^2$ in favour of $u$ in \eqref{eqn-BKdV-mEgn} using the $v$-equation in \eqref{eqn-BKdV-vEgn} finally shows that the results of Hamilton's principle in equation \eqref{var-u-phi} yields the system in equation \eqref{eqns: HB+KdV_intro}.

Hence, the velocity equation and
the momentum density equation in \eqref{eqn-myEqns-Lie} together imply 
via the product rule for the Lie derivative that
\begin{equation}\begin{split}
\left( \partial _t + {\cal L}_{u^\sharp} \right)(u\,dx^2) 
&=
-\,\frac{1}{2}\left( \partial _t + {\cal L}_{u^\sharp} \right)(v^2\,dx^2)
= - \, (vdx)\otimes ( \partial _t + {\cal L}_{u^\sharp} )(v\,dx)
\,,\\
\hbox{so that\quad}
(u_t + 3uu_x)dx^2 
&=
(vdx)\otimes d( v_{xx} + 3 v^2 )
=
\big(v\,\partial_x ( \gamma v_{xx} + 3 v^2)\big)\,dx^2
\,.\end{split}
\label{u-eqn}
\end{equation}
Thus, equations \eqref{eqn-myEqns-Lie} provide the geometric forms of the  
$(u,v)$ mutual interaction wave-current system in \eqref{eqn-myEqns-Lie}.
\begin{align}
\begin{split}
\partial_t u + 3uu_x &= - \,v\, \partial_x\big(\gamma v_{xx} + 3 v^2 \big)
\,,
\\
\partial_t v + \partial_x (uv) &=  -\,\partial_x\big(\gamma v_{xx} + 3 v^2 \big) 
\,.
\end{split}
\label{eqns: EPDiff+KdV1-Ham}
\end{align}
\begin{remark}
Homogeneous boundary conditions have been enforced for all spatial integrations by parts in the previous proof of the Euler-Poincar\'e equations arising from the Lagrangian in equation \eqref{def-BKdV-action}.
The definition $v\,dx=d\phi$ implies that the quantity $\overline{v} = \int_{\mathbb{R}} v(x,t) \,dx$ is constant in time for vanishing boundary conditions as $|x|\to\infty$. 
\end{remark}

\begin{remark}[Hamiltonian formulation for the Burgers-KdV interaction]
Upon writing the KdV velocity as $v:=\phi_x$ and using \eqref{Total-BKdV-mom} to write the Burgers kinetic energy in terms of $m$ and $v$, the natural Hamiltonian for the combined system of KdV `waves' interacting dynamically with the Burgers `current' may be taken as
\begin{align}
h(m,v) =   \int_{\mathbb{R}} \frac12\big( m - \tfrac12v^2  \big)^2 + \frac{\gamma}{2} v_x^2 - v^3    \,dx
\,.\label{eqn: h(m,v)}
\end{align}
The corresponding variational derivatives are given by
\[
\delta h(m,v) =  \int_{\mathbb{R}}  u\,\delta m - \big( uv + 3v^2 
+ \gamma v_{xx} \big)\,\delta v\,dx
\,.\]
Consequently, one may express Hamilton's equations for the Burgers-KdV equations as 
\begin{align}
\begin{split}
\partial_t m&= - \big( \partial_x m + m  \partial_x \big) \frac{\delta h}{\delta m}
= - \big( \partial_x m + m  \partial_x \big) u 
\,,
\\
\partial_t v &= \partial_x  \frac{\delta h}{\delta v} = - \,\partial_x (uv  + \gamma v_{xx} + 3v^2) 
\,.
\end{split}
\label{eqns: EPDiff+KdV1}
\end{align}
The Hamiltonian equations for the Burgers-KdV dynamics in \eqref{eqns: EPDiff+KdV1} may also be written in diagonal matrix Poisson operator form, as%
\begin{align}
 \partial_t
 \begin{bmatrix}
      m
      \\
      v   
 \end{bmatrix}
 = -
 \begin{bmatrix}
     \partial_x m + m\partial_x  & 0 \\
     0  & \partial_x 
 \end{bmatrix}
 \begin{bmatrix}
      \fder{h}{m} = u \\
      \fder{h}{v} = (uv +\gamma v_{xx} + 3 v^2)
 \end{bmatrix}
 .\label{eqn-with3}
\end{align}
Here we see that the  Hamiltonian structure of the $m$-equation in \eqref{eqn-with3} is Lie-Poisson, as expected from the Euler-Poincar\'e reduction. We also see the \emph{second} Hamiltonian structure for the KdV equation, whose  Poisson operator is simply the spatial partial derive $\p_x$. This Hamiltonian structure does not involve introducing the Bott-Virasoro Lie algebra, as needed for the other Hamiltonian structure with the Hamiltonian $\tfrac12\int_{\mathbb{R}}v^2dx$ in order to capture the dispersive term $\gamma v_{xxx}$ in the KdV equation \cite{Marsden[1999]}. 
\end{remark}

\begin{remark}[Casimirs for the diagonal (untangled) Poisson operator in \eqref{eqns: EPDiff+KdV1}]
Casimirs are functionals that Poisson commute with any other functional of the Hamiltonian variables, in this case $(m,v)$. In general, the variational derivative of a Casimir functional is a null eigenvector of the Poisson operator. In the present case, the Casimirs are 
\[
C_m =  \int_{\mathbb{R}}  \sqrt{m}\,dx
\quad\hbox{and}\quad
C_v =  \int_{\mathbb{R}}  v\,dx
\,.\]

In addition, for the present case, the Poisson brackets among the moments $f_m(v)=\int_\mathbb{R} v^m \,dx$ commute among themselves,
\[
\{f_m,f_n\} = 0
\,.\]
\end{remark}

\begin{remark}[The tangled Poisson operator \eqref{eqns: EPDiff+KdV1}]
Transforming variables in the Hamiltonian in \eqref{eqn: h(m,v)} from $h(m,v)$ to $h(u,v)$, by substituting  
\begin{align}
u = m - \tfrac12v^2 
\,,\label{eqn-FibreDeriv}
\end{align}
leads to the equivalent Hamiltonian,  
\begin{align}
h(u,v)=\int \frac12 u^2 + \frac12 v_x^2 - v^3  \, dx
\,.\label{eqn-TotHam}
\end{align}
The corresponding equivalent equations in Hamiltonian form in the transformed variables $(u,v)$ may be expressed in terms of the semidirect product Lie-Poisson equations with a generalised 2-cocycle as
\begin{align}
 \partial_t
 \begin{bmatrix}
      u\\
      v   
 \end{bmatrix}
 = -
 \begin{bmatrix}
     u\partial_x + \partial_x u & v\partial_x\,   \\
     \partial_x v  & \partial_x 
 \end{bmatrix}
 \begin{bmatrix}
      \fder{h}{u} = u\\
      \fder{h}{v} = \gamma v_{xx} + 3 v^2
 \end{bmatrix}
 =
 - \begin{bmatrix}
      3\,uu_x + v \partial_x (\gamma v_{xx} + 3 v^2) \\
      \partial_x (uv) + \partial_x ( \gamma v_{xx} + 3 v^2) 
 \end{bmatrix}
.\label{eqn-with4}
\end{align}
Evolution under the Lie-Poisson bracket corresponding to the Poisson operator in equations \eqref{eqn-with4} is given by
\begin{align}
\frac{df}{dt} = \{f,h\} (u,v) = - \int_\mathbb{R} 
\begin{bmatrix}
\delta{f}/\delta{u} \\  \delta{f}/\delta{v}
\end{bmatrix}^T
\begin{bmatrix}
     u\partial_x + \partial_x u & v\partial_x\,   \\ 
     \partial_x v & \partial_x 
 \end{bmatrix}
\begin{bmatrix}
\delta{h}/\delta{u} \\  \delta{h}/\delta{v}
\end{bmatrix}dx
\,.\label{eqn-LPB2cocycle}
\end{align}
The Poisson bracket in equation \eqref{eqn-LPB2cocycle} is the sum of a Lie-Poisson bracket dual to the semidirect-product Lie algebra $\mathfrak{X}(\mathbb{R})\circledS {\rm Den}(\mathbb{R})$ of vector fields $\mathfrak{X}(\mathbb{R})$ acting on densities ${\rm Den}(\mathbb{R})$ on the real line $\mathbb{R}$   with dual coordinates $u\in \mathfrak{X}^*(\mathbb{R})$ and $v\in {\rm Den}^*(\mathbb{R})$ plus constant antisymmetric bracket with  $\partial_x$ in the $\{v,v\}$ position, inherited as a central extension from the bi-Hamiltonian structure of the KdV equation. By skew symmetry of the Poisson operator in this bracket operation under the $L^2$ pairing, the equations in \eqref{eqn-with4} preserve the  Hamiltonian $h(u,v)$ in equation \eqref{eqn-TotHam} above, since of course \{h,h\} vanishes identically.
\end{remark}
\begin{figure}[H]
    \centering
    \includegraphics[width=\linewidth]{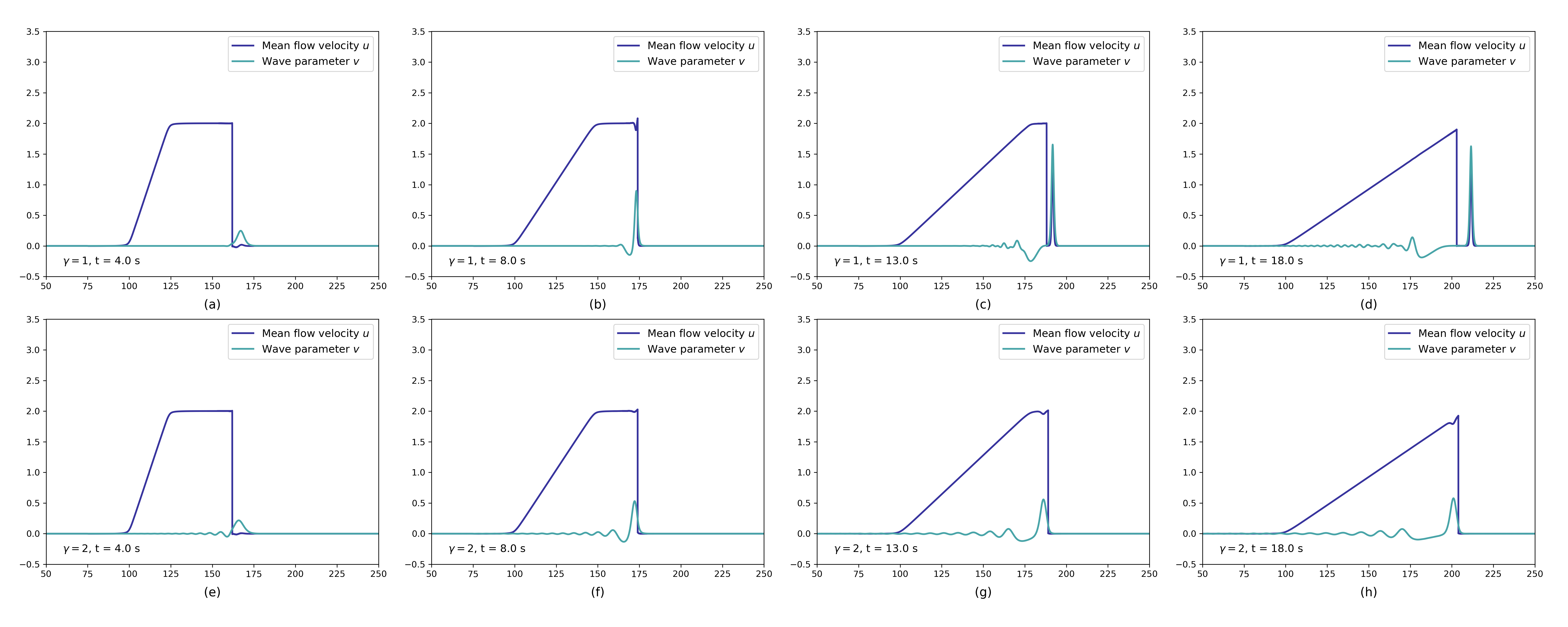}
    \caption{These figures represents the wave-current interaction with different dispersion strengths $\gamma =1$, and $\gamma =2$ at a fixed value of 6 for the the KdV nonlinearity coefficient. Fig (a)-(d) shows that for $\gamma=1$ a compound wave is excited and propagates ahead away from the bore. In contrast, Fig (e)-(h) shows that for $\gamma=2$, the KdV wave moves along with the bore front. }
    \label{fig:gamma1_2}
\end{figure}
\paragraph{Numerical method} In the numerical study, we use the pseudo-spectrum method supported by Dedalus Project \cite{dedalus}. The computational domain is discretized into 32,768 points over a length of 256 units, utilizing a $3/2$ dealiasing factor to ensure accuracy in the Fourier spectral representation. We use a semi-implicit backward differentiation formula (SBDF4) scheme with a fixed timestep of $10^{-6}$.

\subsection{Introducing stochasticity into Burgers-KdV via the SALT approach}\label{sec: SALT}

Ideal fluid dynamics in the Eulerian representation admits a Lie symmetry reduced Euler-Poincar\'e formulation \cite{HMR1998}. In turn, the Euler-Poincar\'e formulation defines advective transport in ideal Eulerian fluid dynamics in terms of the Lie derivative operation. The Stochastic Advection by Lie Transport (SALT) approach introduces stochasticity into the Euler-Poincar\'e formulation of ideal fluid dynamics as a semimartingale for the transport velocity, 
\begin{align}
\rmd x_t = u(x,t)\mathrm{d}t + \sum_{i=1}^N \xi_i(x)\circ \mathrm{d}W^i_t
\,,\label{eqn-SALT}
\end{align}
thereby preserving the geometric form of the deterministic equations. In fact, the drift velocity $u(x,t)$ is the deterministic fluid transport velocity and the $\xi$'s are determined from principle component analysis of observed or simulated data, as discussed in \cite{Cotter-etal-2019,Cotter-etal-2020}.

Since we have formulated the Burgers-KdV equations \eqref{eqn-myEqns-Lie} in terms of deterministic Lie transport, we may reformulate them in SALT form as
\begin{align}
\big(\rmd  + {\cal L}_{\rmd x_t} \big)(m\,dx^2) = 0
\quad\hbox{and}\quad
\big(\rmd  + {\cal L}_{\rmd x_t} \big)(v\,dx) 
= -\, d( \gamma v_{xx} + 3v^2)\mathrm{d}t
\,,\quad\hbox{with}\quad
m := u+\tfrac12v^2
\,.\label{eqn-myEqns-Lie-SALT}
\end{align} 
An analysis of these SALT HB-KdV equations is deferred and will be discussed elsewhere.

\begin{remark}[B-KdV versus B-CH collisions]$\,$
The equations for B-CH collisions have the same geometric structure as for B-KdV,
and the Poisson operator in equation \eqref{BCH-LPform} contains a new `hydrodynamic form' for CH. In addition, if one transforms variables from density $v\,dx$ to 1-form density $\mu=(1-\p_x^2)v\,dx^2$, then the Poisson operation in \eqref{BCH-LPform} transforms to
\begin{align}
\p_t \begin{bmatrix} u \\ \mu \end{bmatrix}
=
- 
\begin{bmatrix} 
\ad^*_\Box u & \ad^*_\Box \mu
\\ 
\ad^*_\Box \mu & -\,\ad_{\,\p_x}\Box
\end{bmatrix}
\begin{bmatrix} \delta H/ \delta u \\ \delta H/ \delta \mu 
\end{bmatrix}
\,.\label{BCH-LPform-mu}
\end{align}
\end{remark}`
\todo[inline]{DH: Denoting $w=\delta H/ \delta \mu\in \mathfrak{X}$ we have $-\ad_{\,\p_x}w = w_x\p_x + w\p_x - w\p_x = \,w_x\p_x $ }

\end{comment}

\section{Burgers -- Nonlinear Schr\"odinger (B-NLS) collisions }\label{sec:NLS}


Consider the following application of Hamilton's principle for the Burgers-NLS dynamics, with $F(N) = \kappa N^2$,
\begin{equation}
\begin{split}
& 0 = \delta S=\delta \int_0^T \int_{\mathbb{R}} \frac{u^2}{2}- N\left(\phi_t+u \phi_x \right)
- \frac{N}{2} \phi_x^2 - \frac{1}{2}(\p_x \sqrt{N})^2+F(N) \,d x dt \,,\\
& =\int_0^T \int_{\mathbb{R}}(u- N \phi_x )\delta u
+\Big(-\big(\phi_t+u  \phi_x \big)
- \frac{1}{2}\phi_x^2 
+ \frac{ (\sqrt{N})_{xx} }{2 \sqrt{N}}+F^{\prime}(N)\Big) \delta N \\
&\qquad\quad +\big(N_t+ \p_x  \big(N(u + \phi_x )\big) \,\delta \phi \,d x dt
\,.\label{HamPrincEqns}
\end{split}
\end{equation}
To check these equations, consider the NLS Hamiltonian in the fixed \emph{laboratory} frame:
\begin{equation}
    H_{Lab}(N,\phi ) = \int \frac{N}{2} \phi_x^2 + \frac{1}{2}\big(\p_x \sqrt{N})^2  - F(N)\big)\,d x
\,,\end{equation}
with variations
\begin{equation}
    \delta H_{Lab} = \int {\left(  \frac{1}{2}\phi_x^2  + \frac{ (\sqrt{N})_{xx} }{2 \sqrt{N}} - F^{\prime}(N) \right) \delta N 
+ \left( { - \p_x (N \p_x \phi )} \right)\delta \phi \,d x} \,.
\end{equation}
In the laboratory frame one then has Hamilton's equations
\begin{equation}
    \begin{split}
{\partial _t}\phi  &= -\,\frac{{\delta H_{Lab}}}{{\delta N}} 
=  -\,\frac{1}{2}\phi_x^2 - \, \frac{ (\sqrt{N})_{xx} }{2 \sqrt{N}} +  F^{\prime}(N)\,,
\\
{\partial _t}N &=  \frac{{\delta H_{Lab}}}{{\delta \phi }} 
=  -\, \p_x  (N\phi_x )\,.
\end{split}
\end{equation}
In the Burgers frame of motion, we have $\delta u = {\partial _t}\xi  - \text{ad}{_u}\xi $ and the 
Hamiltonian  $H(N,\phi )$ is boosted into the Burgers frame by adding the momentum map coupling term
\[
H(N,\phi ) = H_{Lab}(N,\phi ) + \int \mathrm{u}  N \phi_x \,dx
\,.\]
The Lie-Poisson form and the canonical Hamilton's canonical equations
in the Burgers frame then become
\begin{equation}
    \begin{split}
&({\partial _t} + {\mathcal{L}_{u^\sharp})\Big(( u - N\p_x \phi )} ( dx \otimes dx )\Big) = 0
\,,\\
&({\partial _t} + {\mathcal{L}_{(u +\phi_x )^\sharp})(N\,dx)}  =  0
\,,\\
&({\partial _t} + {\mathcal{L}_{u^\sharp})\phi}  =   - \frac{1}{2}  \phi_x^2 -\, \frac{ (\sqrt{N})_{xx} }{2 \sqrt{N}} + F^{\prime}(N)
\,.    \end{split}
\end{equation}
These boosted canonical equations in geometric form reveal the transport operations in the B-NLS equations and they agree with the results of Hamilton's principle in equation \eqref{HamPrincEqns} when their coefficients are collected. In particular, though, they reveal where stochastic transport may be properly added in the B-NLS system. Namely, the addition of stochastic transport to the vector fields $u^\sharp$ or $\phi_x^\sharp$ may be added separately or together, provided the stochastic transport vector fields are uncorrelated.

\section{Conclusion and Outlook}\label{sec: Conclude}

This paper has derived and simulated 1D self-consistent models of wave-current interaction equations modelled by Burgers motion transporting KdV nonlinear wave evolution special initial conditions modelling the overtaking collisions of Burgers bores with KdV and NLS waves. In each case, we have stressed the generality of the derivations of the wave-current interaction equations via the composition-of-maps variational approach by writing the wave-current collision equations in coordinate-free differential form to reveal their geometric structure. We have also simulated the B-KdV and B-NLS equations computationally in order to illustrate their fascinating solution behaviour. 

B-NLS wave-current collisions can be generalised to higher dimensions. In fact, the composition-of-maps approach for the coupling of ideal Euler fluid dynamics to NLS waves via composition of maps in 2D has already been derived, investigated, simulated and discussed in detail in \cite{HHS2023}. However, the complexity of the 2D interactions of fluid flow with NLS waves seen in \cite{HHS2023} warrants 1D investigation to better illustrate the rich solution behaviour in a simpler context.

The coupled wave-current models studied here were also made stochastic using the SALT approach which preserves the variational derivations. The effects of stochasticity in other Burgers -- nonlinear wave interactions also deserve further investigation. 

\section*{Acknowledgements}
We are grateful to our friends, colleagues and collaborators for their advice and encouragement in the matters treated in this paper. 
We especially thank C. Cotter, D. Crisan, E. Luesink, for many insightful discussions of corresponding results similar to the ones derived here for other wave-current interactions. DH and OS are also grateful for partial support during the present work by European Research Council (ERC) Synergy grant DLV-856408 (Stochastic Transport in Upper Ocean Dynamics -- STUOD). 
The work of RH was supported by Office of Naval Research (ONR) grant N00014-22-1-2082 (Stochastic Parameterisation of Ocean Turbulence -- SPOT), and the work of HW was supported by US AFOSR Grant FA9550-19-1-7043 - FDGRP (Fluid Dynamics of Geometric Rough Paths -- FRGRP).

\end{document}